\documentclass[journal]{IEEEtran}
\usepackage{amssymb,stmaryrd,amsmath,amsfonts,rotating,mathrsfs}
\usepackage[noadjust]{cite}
\usepackage{color}
\usepackage{epic}
\RequirePackage{bbm}
\definecolor{TODO}{rgb}{0.6,0.6,0.6} 

\definecolor{TOCHECK}{rgb}{0.8,0.8,0.8} 


\newtheorem{theorem}{Theorem}

\newcommand{\btheo}{\begin{theorem}}
\newcommand{\etheo}{\end{theorem}}
\newcommand{\bproof}{\begin{proof}}
\newcommand{\eproof}{\end{proof}}
\newtheorem{definition}[theorem]{Definition}
\newcommand{\bdefi}{\begin{definition}}
\newcommand{\edefi}{\end{definition}}
\newtheorem{fact}[theorem]{Fact}
\newcommand{\bprop}{\begin{fact}}
\newcommand{\eprop}{\end{fact}}
\newtheorem{corollary}[theorem]{Corollary}
\newcommand{\bcor}{\begin{corollary}}
\newcommand{\ecor}{\end{corollary}}
\newtheorem{example}[theorem]{Example}
\newcommand{\bex}{\begin{example}}
\newcommand{\eex}{\end{example}}
\newtheorem{lemma}[theorem]{Lemma}
\newcommand{\blemma}{\begin{lemma}}
\newcommand{\elemma}{\end{lemma}}
\newtheorem{remark}[theorem]{Remark}
\newcommand{\bremark}{\begin{remark}}
\newcommand{\eremark}{\end{remark}}
\newtheorem{conj}[theorem]{Conjecture}
\newcommand{\bconj}{\begin{conj}}
\newcommand{\econj}{\end{conj}}


\newcommand{\reals}{\ensuremath{\mathbb{R}}}
\newcommand{\naturals}{\ensuremath{\mathbb{N}}}



\def\0{{\tt 0}} 
\def\1{{\tt 1}} 
\def\?{{\tt *}} 
 %


 %
 %



\renewcommand{\mid}{\,|\,}

\newcommand{\BSC}{\ensuremath{\text{BSC}}}

\newcommand{\EEx}{\hfill $\Diamond$}

 %


\newcommand {\p}{\tt P}
\newcommand {\scd}{\tt SC}

\newcommand {\calc}{\mathcal W}

\allowdisplaybreaks
\begin{document}
\title{The Compound Capacity of Polar Codes}
\author{S. Hamed Hassani, Satish Babu Korada and R\"udiger Urbanke \thanks{EPFL, School of Computer
 \& Communication Sciences, Lausanne, CH-1015, Switzerland,
\{seyedhamed.hassani, satish.korada, rudiger.urbanke\}@epfl.ch.
}
}

\maketitle
\begin{abstract}
We consider the compound capacity of polar codes {\em under successive
cancellation decoding} for a collection of binary-input memoryless output-symmetric
channels. By deriving a sequence of upper and lower bounds, we show that
in general the compound capacity under successive decoding is strictly
smaller than the unrestricted compound capacity.  \end{abstract}

\section{History and Motivation}
Polar codes, recently introduced by Ar\i kan \cite{Ari08}, are a family
of codes that achieve the capacity of a large class of channels using
low-complexity encoding and decoding algorithms. The complexity of
these algorithms scales as $O(N\log N)$, where $N$ is the blocklength
of the code. Recently, it has been shown that, in addition to being
capacity-achieving for channel coding,  polar-like codes are also optimal
for lossy source coding as well as multi-terminal problems like the
Wyner-Ziv and the Gelfand-Pinsker problem \cite{KoU09}.

Polar codes are closely related to Reed-Muller (RM) codes. The rows of the generator
matrix of a polar code of length $N=2^n$ are chosen from the rows
of the matrix  $G^{\otimes n}=\bigl[ \begin{smallmatrix} 1 &0 \\ 1&
1 \end{smallmatrix} \bigr]^{\otimes n}$, where $\otimes$ denotes the
Kronecker product. The crucial difference of polar codes to RM codes
is in the choice of the rows. For RM codes the rows of largest weight
are chosen, whereas for polar codes the choice is dependent on the
channel. We refer the reader to \cite{Ari08} for a detailed discussion on
the construction of polar codes. The decoding is done using a successive
cancellation (SC) decoder. This algorithm decodes the bits one-by-one in a pre-chosen
order.

Consider a communication scenario where the transmitter and the receiver
do not know the channel. The only knowledge they have is the set
of channels to which the channel belongs. This is known as the {\em
compound channel} scenario. Let $\calc$ denote the set of channels. 
The compound capacity of $\calc$ is defined as the rate at which we
can reliably transmit irrespective of the particular channel (out of
$\calc$) that is chosen. The compound capacity is given by \cite{BBT59}
\begin{align*}
C(\calc) = \max_{P}\inf_{W\in\calc}I_P(W),
\end{align*}
where $I_P(W)$ denotes the mutual information between the input and
the output of $W$, with the input distribution being $P$. Note that the
compound capacity of $\calc$ can be strictly smaller than the infimum
of the individual capacities. This happens if the capacity-achieving
input distribution for the individual channels are different. On the other
hand, if the capacity-achieving input distribution is the same for all
channels in $\calc$, then the compound capacity is equal to the infimum
of the individual capacities. This is indeed the  case
since we restrict our attention to the class of binary-input memoryless
output-symmetric (BMS) channels.

We are interested in the maximum achievable rate using polar codes and SC
decoding. We refer to this as the compound capacity using polar codes and
denote it as $C_{\p, \scd}(\calc)$.  More precisely, given a collection
$\calc$ of BMS channels we are interested in constructing a polar code of
rate $R$ which works well (under SC decoding) for every channel in this
collection. This means, given a target block error probability, call it
$P_B$, we ask whether there exists a polar code of rate $R$ such that its
block error probability is at most $P_B$ for any channel in $\calc$. In
particular, how large can we make $R$ so that a construction exists for
any $P_B > 0$?

We consider the compound capacity with respect to ignorance at the
transmitter but we allow the decoder to have knowledge of the actual
channel.

\section{Basic Polar Code Constructions}
Rather than describing the standard construction of polar codes, let us
give here an alternative but entirely equivalent formulation. For the
standard view we refer the reader to \cite{Ari08}.

Binary polar codes have length $N=2^n$, where $n$ is an integer.
Under successive decoding, there is a BMS channel associated to each bit
$U_i$ given the observation vector $Y_0^{N-1}$ as well as the values
of the previous bits $U_0^{i-1}$.  This channel has a fairly simple
description in terms of the underlying BMS channel $W$.\footnote {We
note that in order to arrive at this description we crucially use the
fact that $W$ is symmetric. This allows us to assume that $U_0^{i-1}$
is the all-zero vector.}

\begin{definition}[Tree Channels of Height $n$]
Consider the following $N=2^n$ tree channels of height $n$. Let
$\sigma_{1}\dots \sigma_n$ be the $n$-bit binary expansion of
$i$. E.g., we have for $n=3$, $0=000$, $1=001$, \dots, $7=111$. Let 
$\sigma = \sigma_1\sigma_2\dots\sigma_{n}$. Note
that for our purpose it is slightly more convenient to denote the least
(most) significant bit as $\sigma_n$ ($\sigma_1$).  Each tree
channel consists of $n+1$ levels, namely $0,\dots,n$. It is a complete
binary tree. The root is at level $n$. At level $j$ we have $2^{n-j}$
nodes. For $1 \leq j \leq n$, if $\sigma_{j} = 0$ then all nodes on level
$j$ are check nodes; if $\sigma_{j} = 1$ then all nodes on level $j$
are variable nodes. All nodes at level $0$ correspond to independent
observations of the output of the channel $W$, assuming that the input
is $0$.  
\end{definition}

An example for $W^{011}$ (that is $n=3$ and $\sigma=011$) is shown in
Figure~\ref{fig:tree}.
\begin{figure}[!h] \begin{center} \input{treenew1.tex} \end{center}
\caption{ Tree representation of the channel $W^{011}$. The $3$-bit binary
expansion of $3$ is $\sigma_1\sigma_2\sigma_3 = 011$.}\label{fig:tree}
\end{figure}

Let us call $\sigma=\sigma_{1}\dots\sigma_n$ the {\em type} of the
tree. We have $\sigma \in \{0, 1\}^n$.  Let $W^{\sigma}$ be the
channel associated to the tree of type  $\sigma$.  Then $I(W^{\sigma})$
denotes the corresponding capacity. Further, by $Z(W^{\sigma})$
we mean the corresponding Bhattacharyya functional (see  \cite[Chapter 4] {RiU08}).

Consider the channels $W_N^{(i)}$ introduced by Ar\i kan in \cite{Ari08}.
The channel $W_N^{(i)}$ has input $U_i$ and output $(Y_0^{N-1}, U_0^{i-1})$.
Without proof we note that $W_N^{(i)}$ is equivalent to the
channel $W^{\sigma}$ introduced above if we let $\sigma$ be the
$n$-bit binary expansion of $i$.

Given the description of $W^{\sigma}$ in terms of a tree channel,
it is clear that we can use density evolution \cite{RiU08} to compute the
channel law of $W^{\sigma}$. Indeed, assuming that infinite-precision
density evolution has unit cost, it was shown in \cite{MoT09} that the
total cost of computing all channel laws is linear in $N$. 

When using density evolution it is convenient to represent the channel
in the log-likelihood domain. We refer the reader to \cite{RiU08} for a
detailed description of density evolution. The BMS $W$ is represented
as a probability distribution over $\mathbb{R}\cup\{\pm\infty\}$. The
probability distribution is the distribution of the variable
$\log(\frac{W(Y\mid 0)}{W(Y\mid 1)})$, where $Y\sim W(y\mid 0)$.

Density evolution starts at the leaf nodes which are
the channel observations and proceeds up the tree. We have two
types of convolutions, namely the variable convolution (denoted by
$\circledast$) and the check convolution (denoted by $\boxast$). All
the densities corresponding to nodes which are at the same level are
identical. Each node in the $j$-th level is connected to two nodes in
the $(j-1)$-th level. Hence the convolution (depending on $\sigma_j$)
of two identical densities in the $(j-1)$-th level yields the density in
the $j$-th level. If $\sigma_j=0$, then we use a check convolution ($\boxast$), and if
$\sigma_j=1$, then we use a variable convolution ($\circledast$).  

\begin{example}[Density Evolution] Consider the channel shown in Figure~\ref{fig:tree}. By
some abuse of notation, let $W$ also denote the initial density
corresponding to the channel $W$. Recall that $\sigma=011$.  Then the
density corresponding to $W^{011}$ (the root node) is given by
\begin{align*}
\Bigl((W^{\boxast 2})^{\circledast 2}\Bigr)^{\circledast 2} = (W^{\boxast 2})^{\circledast 4}.
\end{align*}
\EEx
\end{example}

\section{Main Results}
Consider two BMS channels $P$ and $Q$. We are interested in constructing
a common polar code of rate $R$ (of arbitrarily large block length)
which allows reliable transmission over both channels.

Trivially,
\begin{align} \label{equ:Ibound}
C_{\text{P, SC}}(P, Q) & \leq \min\{I(P), I(Q)\}.
\end{align}
We will see shortly that, properly applied, this simple fact
can be used to give tight bounds.

For the lower bound we claim that
\begin{align} 
C_{\text{P, SC}}(P, Q) & \geq C_{\text{P, SC}}(\text{BEC}(Z(P)), \text{BEC}(Z(Q))) \nonumber \\
                       & = 1-\max\{Z(P), Z(Q)\}.   \label{equ:Zbound}
\end{align}
To see this claim, we proceed as follows. Consider a particular
computation tree of height $n$ with observations at its leaf nodes
from a BMS channel with Battacharyya constant $Z$. What is the largest
value that the Bhattacharyya constant of the root node can take on?
From the extremes of information combining framework (\cite[Chapter 4]
{RiU08}) we can deduce that we get the largest value if we take the
BMS channel to be the BEC$(Z)$. This is true, since at variable nodes
the Bhattacharyya constant acts multiplicatively for any channel, and
at check nodes the worst input distribution is known to be the one from
the family of BEC channels. Further, BEC densities stay preserved within
the computation graph.

The above considerations give rise to the following transmission scheme.
We signal on those channels $W^{\sigma}$ which are reliable for the
BEC$(\max\{Z(P), Z(Q)\})$. A fortiori these channels are also reliable
for the actual input distribution. In this way we can achieve a reliable
transmission at rate $1-\max\{Z(P), Z(Q)\}$.

\begin{example}[BSC and BEC]
Let us apply the above mentioned bounds to $C_{\text{P, SC}}(P,
Q)$, where $P=\text{BEC}(0.5)$ and $Q=\text{BSC}(0.11002)$. We
\begin{align*}
I(P)=I(Q)& =0.5, \\
Z(\text{BEC}(0.5)) & =0.5, \\
Z(\text{BSC}(0.11002))& = 2 \sqrt{0.1102 (1-0.11002)} \approx 0.6258.
\end{align*}
The upper bound (\ref{equ:Ibound}) and the lower bound (\ref{equ:Zbound}) 
then translate to
\begin{gather*}
C_{\text{P, SC}}(P, Q))  \leq \min\{0.5,0.5\} = 0.5, \\
C_{\text{P, SC}}(P, Q))  \geq 1 -\max\{0.6258, 0.5\} = 0.3742.
\end{gather*}
Note that the upper bound is trivial, but the lower bound is not.
\EEx
\end{example}

In some special cases the best achievable rate is easy to determine.
This happens in particular if the two channels are ordered by
degradation.  
\begin{example}[BSC and BEC Ordered by Degradation]
Let $P=\text{BEC}(0.22004)$ and $Q=\text{BSC}(0.11002)$.  We have
$I(P)=0.770098$ and $I(Q)=0.5$.  Further, 
one can check that the $\text{BSC}(0.11002)$ is degraded
with respect to the $\text{BEC}(0.22004)$. This implies
that any sub-channel of type $\sigma$ which is good for the
$\text{BSC}(0.11002)$, is also good for the $\text{BEC}(0.22004)$.  Hence,
$C_{\text{P,SC}}(\text{BEC}(0.22004),\text{BSC}(0.11002))=I(Q)=0.5$. 
\EEx
\end{example}

More generally, if the channels $\mathcal W$ are such that there is a channel
$W\in\mathcal W$ which is degraded with respect to every channel in $\mathcal
W$, then $C_{\text{P,SC}}(\mathcal W) = C (\mathcal W) = I(W)$. Moreover, the sub-channels $\sigma$
that are good for $W$ are good also for all channels in $\mathcal W$.

So far we have looked at seemingly trivial upper and lower bounds on the compound
capacity of two channels.  As we will see now, it is quite simple to considerably tighten
the result by considering individual branches of the computation tree separately.
\begin{theorem}[Bounds on Pairwise Compound Rate]
\label{the:compoundbound}
Let $P$ and $Q$ be two BMS channels. Then for any $ n \in \naturals$

\begin{align*}
C_{\text{P, SC}}(P, Q)  \leq & 
          \frac{1}{2^n} \sum_{\sigma \in \{0, 1\}^n}  \min\{ I(P^{\sigma}), I(Q^{\sigma})   \},  \\
C_{\text{P, SC}}(P, Q)  \geq  
&         1- \frac{1}{2^n} \sum_{\sigma \in \{0, 1\}^n}  \max\{ Z(P^{\sigma}), 
Z(Q^{\sigma})\}.
\end{align*}
Further, the upper as well as the lower bounds converge to the compound capacity as
$n$ tends to infinity and the bounds are monotone with respect to $n$. 
\end{theorem}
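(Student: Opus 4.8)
The plan is to exploit the self-similarity of the polar construction so as to reduce the depth-$n$ bounds to the depth-$0$ bounds (\ref{equ:Ibound}) and (\ref{equ:Zbound}) that are already in hand. The key structural fact I would establish first is that the length-$2^{n+m}$ tree channels factor through the depth-$n$ channels: for every $\sigma\in\{0,1\}^n$ and $\rho\in\{0,1\}^m$ one has $W^{\sigma\rho}=(W^\sigma)^\rho$, so that as $\rho$ ranges over $\{0,1\}^m$ the channels $\{W^{\sigma\rho}\}$ are precisely the depth-$m$ polarization of $W^\sigma$. Consequently, designing a common polar code for $(P,Q)$ splits into $2^n$ independent subproblems, one for each prefix $\sigma$, the $\sigma$-th being the common-code problem for the pair $(P^\sigma,Q^\sigma)$. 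This yields the recursion
\begin{align*}
C_{\text{P, SC}}(P,Q)=\frac{1}{2^n}\sum_{\sigma\in\{0,1\}^n}C_{\text{P, SC}}(P^\sigma,Q^\sigma),
\end{align*}
which I would justify operationally: an index $\sigma\rho$ is good for both $P$ and $Q$ iff $\rho$ is good for both $P^\sigma$ and $Q^\sigma$, and $C_{\text{P, SC}}$ of a pair equals the limiting ($m\to\infty$) density of such jointly-good indices.

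Given the recursion, the two bounds are immediate. Applying the trivial converse (\ref{equ:Ibound}) to each subproblem gives $C_{\text{P, SC}}(P^\sigma,Q^\sigma)\le\min\{I(P^\sigma),I(Q^\sigma)\}$, while applying the $\text{BEC}$-achievability bound (\ref{equ:Zbound}) gives $C_{\text{P, SC}}(P^\sigma,Q^\sigma)\ge 1-\max\{Z(P^\sigma),Z(Q^\sigma)\}$. Averaging over $\sigma$ produces exactly the stated upper and lower bounds, and setting $n=0$ recovers (\ref{equ:Ibound})--(\ref{equ:Zbound}).

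Next I would prove monotonicity termwise, using that each $W^\sigma$ spawns a check child $W^{\sigma\mathtt{0}}$ and a variable child $W^{\sigma\mathtt{1}}$ at depth $n+1$. For the upper bound I use conservation of mutual information, $I(W^{\sigma\mathtt{0}})+I(W^{\sigma\mathtt{1}})=2I(W^\sigma)$, together with the elementary inequality $\min\{a+b,c+d\}\ge\min\{a,c\}+\min\{b,d\}$, which gives $\min\{I(P^\sigma),I(Q^\sigma)\}\ge\frac{1}{2}\big(\min\{I(P^{\sigma\mathtt{0}}),I(Q^{\sigma\mathtt{0}})\}+\min\{I(P^{\sigma\mathtt{1}}),I(Q^{\sigma\mathtt{1}})\}\big)$, so the upper bound is non-increasing in $n$. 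For the lower bound I use the Bhattacharyya recursions $Z(W^{\sigma\mathtt{1}})=Z(W^\sigma)^2$ (variable node) and $Z(W^{\sigma\mathtt{0}})\le 2Z(W^\sigma)-Z(W^\sigma)^2$ (check node); since both $x\mapsto x^2$ and $x\mapsto 2x-x^2$ are increasing on $[0,1]$, the channel with the larger $Z(W^\sigma)$ dominates in each child, yielding $\max\{Z(P^{\sigma\mathtt{0}}),Z(Q^{\sigma\mathtt{0}})\}+\max\{Z(P^{\sigma\mathtt{1}}),Z(Q^{\sigma\mathtt{1}})\}\le 2\max\{Z(P^\sigma),Z(Q^\sigma)\}$, so the lower bound is non-decreasing in $n$.

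Finally, for convergence I would show the gap tends to zero. Writing $U_n$ and $L_n$ for the claimed upper and lower bounds and setting $t_\sigma=\min\{I(P^\sigma),I(Q^\sigma)\}+\max\{Z(P^\sigma),Z(Q^\sigma)\}-1$, the difference is $U_n-L_n=2^{-n}\sum_\sigma t_\sigma$. Using $I(W)+Z(W)\ge 1$ for every BMS channel one checks $0\le t_\sigma\le 1$. The polarization theorem applied to $P$ and to $Q$ shows that, outside a set of $\sigma$ of vanishing density, both $P^\sigma$ and $Q^\sigma$ are near-extremal ($Z\approx 0$ with $I\approx 1$, or $Z\approx 1$ with $I\approx 0$), and in each of the resulting cases $t_\sigma\to 0$. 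Hence $2^{-n}\sum_\sigma t_\sigma\to 0$, and since $L_n\le C_{\text{P, SC}}(P,Q)\le U_n$ with $L_n$ increasing and $U_n$ decreasing, both sequences converge to $C_{\text{P, SC}}(P,Q)$. The main obstacle I expect is the first step: making the self-similar decomposition and the operational identification of $C_{\text{P, SC}}$ with the limiting density of jointly-good synthetic channels fully rigorous, i.e.\ linking the operational compound-coding definition to the density-evolution quantities $I(\cdot)$ and $Z(\cdot)$. The monotonicity and convergence steps are then routine.
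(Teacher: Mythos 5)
Your proposal is correct and follows essentially the same route as the paper: the self-similar factorization $W^{\sigma\rho}=(W^{\sigma})^{\rho}$, which reduces the problem for each prefix $\sigma$ to the trivial bounds (\ref{equ:Ibound})--(\ref{equ:Zbound}) applied to the pair $(P^{\sigma},Q^{\sigma})$, the polarization phenomenon to drive the gap between the bounds to zero, and conservation of mutual information together with superadditivity of $\min$ for monotonicity of the upper bound. The only noteworthy difference is that you make the lower-bound monotonicity step explicit via the Bhattacharyya recursions $Z\mapsto Z^{2}$ and $Z\mapsto\text{(at most) }2Z-Z^{2}$ and their monotonicity on $[0,1]$, whereas the paper dismisses it as ``a similar argument''; your version is the right way to fill that in, since the naive $\max$-subadditivity analogue of the upper-bound argument does not go through.
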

\begin{proof}
Consider all $N=2^n$ tree channels. Note that there are $2^{n-1}$ such
channels that have $\sigma_{1}=0$  and $2^{n-1}$ such channels that have
a $\sigma_{1}=1$. Recall that $\sigma_{1}$ corresponds to the type of
node at level $n$.

This level transforms the original channel $P$ into $P^{0}$ and
$P^{1}$, respectively. Consider first the $2^{n-1}$ tree channels
that correspond to $\sigma_{1}=1$. Instead of thinking of each tree as
a tree of height $n$ with observations from the channel $P$, think of
each of them as a tree of height $n-1$ with observations coming from
the channel $P^{1}$.  By applying our previous argument, we see
that if we let $n$ tend to infinity then the common capacity for this
half of channels is at most $0.5 \min\{I(P^{1}), I(Q^{1})\}$.
Clearly the same argument can be made for the second half of channels.
This improves the trivial upper bound (\ref{equ:Ibound}) to
\begin{align*}
C_{\text{P, SC}}(P, Q)  \leq & 0.5 \min\{I(P^{1}), I(Q^{1})\} + \\
& 0.5 \min\{I(P^{0}), I(Q^{0})\}.
\end{align*}
Clearly the same argument can be applied to trees of any height $n$.
This explains the upper bound on the compound capacity
of the form $\min\{I(P^{\sigma}), I(Q^{\sigma})\}$.

In the same way we can apply this argument to the lower bound (\ref{equ:Zbound}).

From the basic polarization phenomenon we know that for every $\delta>0$
there exists an $n \in \naturals$ so that
\begin{align*}
\frac{1}{2^n} |\{\sigma \in \{0, 1\}^n: I(P^{\sigma}) \in [\delta, 1-\delta] \}| \leq \delta/4.
\end{align*} 
Equivalent statements hold for $I(Q^{\sigma})$, $Z(P^{\sigma})$, and $Z(Q^{\sigma})$.

In words, except for at most a fraction $\delta$, all channel pairs $(P^{\sigma}, Q^{\sigma})$
have ``polarized." For each polarized pair both the upper as well as the lower bound
are loose by at most $\delta$. Therefore, the gap between the upper and lower bound
is at most $(1-\delta) 2 \delta + \delta$.

To see that the bounds are monotone consider a particular type $\sigma$ of length $n$.
Then we have
\begin{align*}
& \min \{I(P^{\sigma}), I(Q^{\sigma})\} \\
& = \min \{\frac12 (I(P^{\sigma 0})+I(P^{\sigma 1})), \frac12 (I(Q^{\sigma 0})+I(Q^{\sigma 1}))\} \\
&\geq \frac 12 \min \{I(P^{\sigma 0}), I(Q^{\sigma 0})\} + \frac12 \min \{I(P^{\sigma 1}), I(Q^{\sigma 1})\}. 
\end{align*}
A similar argument applies to the lower bound.
\end{proof}

Remark: 
In general there is no finite $n$ so that either upper or lower
bound agree exactly with the compound capacity. 
On the positive side, the lower bounds are constructive and give an actual strategy to
construct polar codes of this rate.
\begin{example}[Compound Rate of BSC$(\delta)$ and BEC$(\epsilon)$]
Let us compute upper and lower bounds on $C_{\text{P, SC}}(\text{BSC}(0.11002), \text{BEC}(0.5))$. 
Note that both the BSC$(0.11002)$ as well as the BEC$(0.5)$ have capacity one-half.
Applying the bounds of Theorem~\ref{the:compoundbound} we get:
\begin{center}
\begin{tabular}{ccccccc} 
n=0     & 1     & 2     & 3     & 4     & 5     & 6 \\ \hline
0.500 & 0.482 & 0.482 & 0.482 & 0.482 & 0.482 & 0.482 \\
0.374 & 0.407 & 0.427 & 0.440 & 0.449 & 0.456 & 0.461
\end{tabular}
\end{center}
These results suggest that the numerical value of 
$C_{\text{P, SC}}(\text{BSC}(0.11002), \text{BEC}(0.5))$ is close to
$0.482$.
\EEx
\end{example}

\begin{example}[Bounds on Compound Rate of BMS Channels]
In the previous example we considered the compound capacity of two BMS
channels. How does the result change if we consider a whole family of
BMS channels. E.g., what is $C_{\text{P, SC}}(\{\text{BMS}(I=0.5)\})$?

We currently do not know of a procedure (even numerical) to compute this
rate. But it is easy to give some upper and lower bounds.

In particular we have
\begin{align}
C_{\text{P, SC}}(\{\text{BMS}(I=0.5)\}) & \leq C(\text{BSC}(0.11002), \text{BEC}(0.5)) \nonumber \\
                                           & \leq 0.4817, \nonumber \\
C_{\text{P, SC}}(\{\text{BMS}(I=0.5)\}) & \geq 1-Z(\text{BSC}(I=0.5)) \approx 0.374. 
\label{equ:universallowerbound} 
\end{align}
The upper bound is trivial. The compound rate of a whole class cannot be
larger than the compound rate of two of its members. For the lower bound note
that from Theorem~\ref{the:compoundbound} we know that the achievable
rate is at least as large as $1-\max\{Z\}$, where the maximum is over
all channels in the class. Since the BSC has the largest Bhattacharyya
parameter of all channels in the class of channels with a fixed capacity,
the result follows.

\EEx
\end{example}

\section{A Better Universal Lower Bound}\label{sec:universal}
The universal lower bound expressed in (\ref{equ:universallowerbound})
is rather weak.  Let us therefore show how to strengthen it.

Let $\mathcal W$ denote a class of BMS channels.
From Theorem~\ref{the:compoundbound} we know that
in order to evaluate the lower bound we have to optimize the terms
$Z(P^{\sigma})$ over the class $\mathcal W$.

To be specific, let $\mathcal W$ be $\text{BMS}(I)$, i.e., the space of
BMS channels that have capacity $I$. Expressed in an alternative way,
this is the space of distributions that have entropy equal to $1-I$. 

The above optimization is in general a difficult problem.
The first difficulty is that the space $\{\text{BMS}(I)\}$ is infinite
dimensional. Thus, in order to use numerical procedures we have to 
approximate this space by a finite dimensional space. Fortunately, as
the space is compact, this task can be accomplished. E.g.,  
look at the densities corresponding to the class $\{\text{BMS}(I)\}$ in the $\vert D \vert
$-domain. In this domain, each BMS channel $W$ is represented by the
density corresponding to the probability distribution of $\vert W(Y\mid
0) - W(Y\mid 1)\vert$, where $Y \sim W(y\mid 0)$. For example, the
$|D|$-density corresponding to $\BSC(\epsilon)$ is $\Delta_{1-2\epsilon}$.

We quantize the interval $[0, 1]$ using real values $0= p_1 < p_2 <
\cdots < p_m =1$, $m \in \naturals$. The $m$-dimensional polytope
approximation of $\{\text{BMS}(I)\}$, denoted by ${\mathcal W}_m$, is the space of
all the densities which are of the form $\sum_{i=1} ^{m} \alpha_i \Delta
_{p_i}$. Let $\alpha = [\alpha_1,\cdots,\alpha_m]^\top$. Then $\alpha$
must satisfy the following linear constraints:
\begin{equation}
\begin{aligned}
\alpha^\top 1_{m\times 1}=1, \; \alpha^\top H_{m\times 1} = 1-I,\; & \alpha_i \geq 0,
\end{aligned}
\label{quantization}
\end{equation}
where $H_{m\times1}=[h_2 (\frac{1-p_i}{2})]_{m\times1}$ and $1_{m\times1}$ is the
all-one vector. 

Due to quantization, there is in general an approximation error. 

\begin{lemma}[$m$ versus $\delta$] \label{lem:approximation}
Let $a \in \text{BMS}(I)$. Assume a uniform quantization of the interval $[0, 1]$
with $m$ points $0 = p_1 < p_2 < \dots < p_m=1$. If $m \geq 1+  \frac {1}
{1-\sqrt[4]{1-\delta^2}}$, then there exists a density $b \in {\mathcal W}_m$
such that $\vert Z(a\boxast a)-Z(b\boxast b) \vert \leq \delta$.
\end{lemma}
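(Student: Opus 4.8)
The plan is to carry out the entire argument in the $|D|$-domain, where both the check convolution $\boxast$ and the Bhattacharyya functional take a transparent form. Recall that a channel concentrated at a single $|D|$-value $x$ is the $\BSC$ with $Z=\sqrt{1-x^2}$, that $Z$ averages linearly over the $|D|$-density, and that $\boxast$ acts multiplicatively on $|D|$-values. Hence, writing $\mathsf{a}$ for the $|D|$-density of $a$ and letting $X,X'$ be i.i.d.\ draws from $\mathsf{a}$,
\begin{align*}
Z(a \boxast a) = \expectation\bigl[\sqrt{1 - X^2 (X')^2}\,\bigr] = \int_0^1\!\!\int_0^1 \sqrt{1-x^2y^2}\, d\mathsf{a}(x)\, d\mathsf{a}(y),
\end{align*}
and similarly for $b$. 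Thus the statement reduces to approximating $\mathsf{a}$ by a grid-supported density $\mathsf{b}$ while controlling this symmetric double integral.

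First I would construct $b$ so that it genuinely lies in ${\mathcal W}_m$, which forces $b$ to have capacity exactly $I$. Write $H(x)=h_2(\tfrac{1-x}{2})$, the per-value entropy appearing in the constraint (\ref{quantization}); it is continuous and strictly decreasing from $H(0)=1$ to $H(1)=0$. For a unit mass of $\mathsf{a}$ at a point $x$ lying in a grid cell $[p_i,p_{i+1}]$, I split it as $\beta\Delta_{p_i}+(1-\beta)\Delta_{p_{i+1}}$, choosing $\beta\in[0,1]$ so that $\beta H(p_i)+(1-\beta)H(p_{i+1})=H(x)$; this is possible precisely because $H(x)$ lies between $H(p_{i+1})$ and $H(p_i)$. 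Integrating this splitting against $\mathsf{a}$ yields $\mathsf{b}$ with $\int d\mathsf{b}=1$ and $\int H\,d\mathsf{b}=\int H\,d\mathsf{a}=1-I$, so $b\in{\mathcal W}_m$. The construction also gives a coupling of $\mathsf{a}$ and $\mathsf{b}$: each $X$ maps to $\tilde X\in\{p_i,p_{i+1}\}$ with $|X-\tilde X|\le\eta$, where $\eta=\tfrac{1}{m-1}$ is the (uniform) grid spacing.

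With this coupling, $Z(a\boxast a)-Z(b\boxast b)=\expectation[\phi(X,Y)-\phi(\tilde X,\tilde Y)]$ for $\phi(x,y)=\sqrt{1-x^2y^2}$ and independent copies $(X,\tilde X),(Y,\tilde Y)$, so it suffices to bound $|\phi(x,y)-\phi(\tilde x,\tilde y)|$ pointwise over all $x,\tilde x,y,\tilde y\in[0,1]$ with $|x-\tilde x|\le\eta$ and $|y-\tilde y|\le\eta$. The key claim is that this pointwise difference is at most $\sqrt{1-(1-\eta)^4}$. To see it, note $\phi$ depends only on the products $s=xy$ and $t=\tilde x\tilde y$, and the gap equals $\sqrt{1-s^2}-\sqrt{1-t^2}$ (after relabeling so $t\ge s$), which is increasing in $t$ and decreasing in $s$. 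Maximizing therefore pushes $t$ to its largest feasible value $t=1$, which forces $\tilde x=\tilde y=1$ and hence $x,y\ge 1-\eta$, i.e.\ $s\ge(1-\eta)^2$; the extremal configuration is $x=y=1-\eta$, $\tilde x=\tilde y=1$, giving exactly $\sqrt{1-(1-\eta)^4}$. Substituting $\eta=1/(m-1)$, the hypothesis $m\ge 1+\tfrac{1}{1-\sqrt[4]{1-\delta^2}}$ is equivalent to $\sqrt{1-(1-\eta)^4}\le\delta$, which closes the argument.

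The main obstacle is this last step. The naive estimate $|\sqrt{u}-\sqrt{v}|\le\sqrt{|u-v|}$ applied to $u=1-x^2y^2$ and $v=1-\tilde x^2\tilde y^2$ only yields $\sqrt{4\eta}$, which has the right order in $\eta$ but the wrong constant and does not reproduce the stated threshold. Getting the exact bound requires the extremal analysis above, whose one delicate point is justifying that the maximum sits at the corner $x=y=1$ rather than at some interior trade-off between a smaller $s$ and a smaller $t$. The monotonicity of $\sqrt{1-s^2}-\sqrt{1-t^2}$ together with a check that the objective strictly decreases as one leaves the corner (the $s$-derivative of $\sqrt{1-s^2}$ blows up as $s\to 1$) settles this.
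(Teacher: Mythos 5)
Your proof is correct and reaches the paper's exact threshold, but your construction of $b$ is genuinely different from the paper's. The paper forms two \emph{one-sided} quantizations: $Q_u(a)$ (mass in $(p_i,p_{i+1}]$ moved up to $p_{i+1}$, an upgraded channel) and $Q_d(a)$ (mass moved down to $p_i$, a degraded channel), so that $H(Q_u(a)) \le H(a) \le H(Q_d(a))$ and $Z(Q_u(a)^{\boxast 2}) \le Z(a^{\boxast 2}) \le Z(Q_d(a)^{\boxast 2})$; it bounds both sign-definite $Z$-gaps by the per-cell worst case, and then invokes the intermediate value theorem on $f(t)=H(tQ_u(a)+(1-t)Q_d(a))$ to find a convex combination whose entropy is exactly $1-I$, the check-square $Z$ of which is automatically sandwiched between the two extremes because convex combinations inherit the degradation ordering. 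Your per-cell split --- distributing the mass at $x\in[p_i,p_{i+1}]$ onto the two endpoints with weights chosen so that $h_2(\frac{1-x}{2})$ is preserved cell by cell --- lands in ${\mathcal W}_m$ by construction, which buys you an explicit density and lets you dispense with both the IVT step and all of the upgradation/degradation machinery; the price is a \emph{two-sided} coupling estimate (your $\tilde X$ may move up or down), which is exactly why you need the corner-extremality analysis of $|\phi(x,y)-\phi(\tilde x,\tilde y)|$ in place of the paper's one-sided bounds. On that one delicate point the two write-ups are at a comparable level of rigor --- the paper simply asserts that the maximum error occurs when $xy$ is close to $1$, while you flag the issue and sketch the reason --- and your claim is indeed true: with $\eta=\frac{1}{m-1}$ and $\phi(x,y)=\sqrt{1-x^2y^2}$, after your relabeling one may take $\tilde x=\min(x+\eta,1)$ and $\tilde y=\min(y+\eta,1)$ (the gap increases in $t=\tilde x\tilde y$ for fixed $(x,y)$); the resulting gap is nondecreasing in $x$ and in $y$ while the clamping is inactive, because $(u,v)\mapsto uv^2/\sqrt{1-u^2v^2}$ is increasing in each argument on $[0,1)$, and is decreasing in $x$ and $y$ once $\tilde x=\tilde y=1$; hence the supremum sits at $x=y=1-\eta$, $\tilde x=\tilde y=1$, giving exactly $\sqrt{1-(1-\eta)^4}$ and hence your stated condition on $m$.
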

\begin{proof}
For a given density $a$, let $Q_u(a) (Q_d(a))$ denote the
quantized density obtained by mapping the mass in the interval
$(p_i,p_{i+1}]$($[p_i,p_{i+1})$) to $p_{i+1}$ ($p_i$).  Note that $Q_u(a)$
($Q_d(a)$) is {\em u}pgraded ({\em d}egraded) with respect to $a$.  Thus, $H(Q_u(a))
\leq H(a) \leq H(Q_d(a))$. The Bhattacharyya parameter $Z(a\boxast a)$is given by
$$Z(a\boxast a)=\int_0 ^{1} \int_0^1\sqrt {1-x_1^2x_2^2} a(x_1) dx_1 a(x_2) dx_2.$$
Since $\sqrt {1-x^2}$ is
decreasing on $[0,1]$, we have
\begin{align*}
& Z(Q_d(a)\boxast Q_d(a))-Z(a\boxast a) \\
&\leq \sum_{i,j=1}^{m-1} \int_{p_i} ^{p_{i+1}}  \int_{p_j} ^{p_{j+1}}\Bigl(\sqrt
{1-p_i^2p_j^2}-\sqrt {1-x^2y^2} \Bigr)\\
&\phantom{=================} a(x) dx a(y) dy,\\
& Z(a\boxast a ) - Z(Q_u(a)\boxast Q_u(a)) \\
&\leq \sum_{i,j=1}^{m-1} \int_{p_i} ^{p_{i+1}} \int_{p_j} ^{p_{j+1}} \Bigl(\sqrt
{1-x^2y^2}-\sqrt {1-p_{i+1}^2p_{j+1}^2} \Bigr)\\
&\phantom{=================}a(x) dx a(y) dy.
\end{align*}
Now note that the maximum approximation error, call it $\delta$, happens 
when $xy$ is close to $1$. This maximum error is equal to 
\begin{align*}
& \sqrt {1-\Big(1-\Big(\frac1{m-1}\Big)\Big)^4}-\sqrt {1-1^4}.
\end{align*}
Solving for $m$ we see that the quantization error can be made smaller 
than $\delta$ by choosing $m$ such that 
\begin{align} 
 m \geq 1+  \frac {1} {1-\sqrt[4]{1-\delta^2}}.
\label {m_delta}
\end{align}
Note that if $a \in {\mathcal W}$ then in general neither $Q_d(a)$
nor $Q_d(a)$ are elements of ${\mathcal W}_m$, since their entropies
do not match.  In fact, as discussed above, the entropy of $Q_d(a)$
is too high, and the entropy of $Q_u(a)$ is too low. But by taking a
suitable convex combination we can find an element $b \in {\mathcal W}_m$
for which $Z(b^{\boxast 2})$ differs from $Z(a ^{\boxast 2})$ by at most $\delta$.

In more detail, consider the function $f(t)=H(tQ_u(a)+(1-t)Q_d(a))$,
$0 \leq t \leq 1$. Clearly, $f$ is a continuous function on its
domain. Since every density of the form of $tQ_u(a)+(1-t)Q_d(a)$ is
upgraded with respect to $Q_d(a)$ and degraded with respect to $Q_u(a)$,
we have $Z((Q_u(a))^{\boxast 2})\leq Z((tQ_u(a)+(1-t)Q_d(a))^{\boxast
2}) \leq Z((Q_d(a))^{\boxast 2})$. As a result: $\vert
Z((tQ_u(a)+(1-t)Q_d(a))^{\boxast 2})-Z(a^{\boxast 2}) \vert \leq \delta $.
We further have $f(0)=H(Q_u(a))\leq H(a) \leq H(Q_d(a)) = f(1)$. Thus
there exists a $0 \leq t_0 \leq 1$ such that $f(t_0)=H(a)=I$. Hence, $t_0
Q_u(a)+(1-t_0)Q_d(a) \in \text{BMS}(I)$ and $t_0 Q_u(a)+(1-t_0)Q_d(a)
\in {\mathcal W}_m$. Therefore $t_0 Q_u(a)+(1-t_0)Q_d(a)$ is the desired
density.  \end{proof}

\begin{example}[Improved Bound for BMS$(I=\frac12)$]
Let us derive an improved bound for the class ${\mathcal W}=\text{BMS}(I=\frac12)$.
We pick $n=1$, i.e., we consider tree channels of 
height $1$ in Theorem~\ref{the:compoundbound}.

For $\sigma=0$ the implied operation is $\circledast$.  It is well
known that in this case the maximum of $Z(a \circledast a)$ over
all $a \in {\mathcal W}$ is achieved for $a = \text{BSC}(0.11002)$.
The corresponding maximum $Z$ value is $0.3916$.  

Next consider $\sigma=1$. This corresponds to the convolution $\boxast$. 
Motivated by Lemma~\ref{lem:approximation} consider at first the 
maximization of $Z$ within the class ${\mathcal W}_m$:
\begin{equation}
\begin{aligned}
& \text{maximize}: 
\sum_{i,j} \alpha_{i} \alpha_{j} Z(\Delta_{p_{i}} \boxast \Delta_{p_{j}}) =
\sum_{i,j} \alpha_{i} \alpha_{j} \sqrt{1-(p_{i} p_{j})^2} \\
& \text{subject to : } \alpha^\top 1_{m\times 1}=1, \; \alpha^\top H_{m\times 1} = \frac 12,\; 
\alpha_i \geq 0.
\end{aligned}
\end{equation}
In the above, since the $p_i$s are fixed, the terms
$\sqrt{1-(p_{i} p_{j})^2}$ are also fixed. The
task is to optimize the quadratic form
$\alpha^\top P \alpha$ over
the corresponding $\alpha$ polytope, where the $m\times m$ matrix $P$ is
defined as $P_{ij}=\sqrt{1-(p_i p_j)^2}$. 
We claim that this is a convex optimization problem. 

To see this, expand $\sqrt{1-x^2}$ as a Taylor series in the form
\begin{equation}
\sqrt{1-x^2}=1-\sum_{l \geq 0} t_l x^{2l},
\end{equation}
where the $t_l\geq 0$. We further have
\begin{equation}
\alpha^{\top}P \alpha=\sum_{i,j} \alpha_i \alpha_j \sqrt{1-(p_i p_j)^2}= 
1-\sum_{l\geq0} t_l \Bigl(\sum_i \alpha_i {p_i}^{2l}\Bigr)^2.
\end{equation}
Thus, since $t_l \geq 0$ and the $p_i$s are fixed, each of the terms 
$-t_l( \sum_i \alpha_i {p_i}^{2l})^2$ in the above sum represents a concave function. As
a result the whole function is concave. 


To find a bound, let us relax the condition $0 \leq \alpha_i \leq 1$
and admit $\alpha \in \reals$. We are thus faced with solving the convex
optimization problem 
\begin{equation}
\begin{aligned}
& \text{maximize}: \alpha^{\top}P \alpha \nonumber \\
& \text{subject to : } \alpha^\top 1_{m\times 1}=1, \; \alpha^\top H_{m\times 1} = \frac 12.
\end{aligned}
\end{equation}
The Kuhn-Tucker conditions for this problem yield 
\begin{equation}
\begin{bmatrix}
2P & 1 & H \\
1^\top & 0 & 0 \\
H^\top& 0 & 0 
\end{bmatrix}
\begin{bmatrix}
 \alpha_1 \\
 \alpha_2 \\
\vdots \\
\alpha_n\\
\lambda_1\\
\lambda_2
\end{bmatrix}=
\begin{bmatrix}
0 \\
0 \\
\vdots \\
0\\
1\\
\frac 12
\end{bmatrix}.
\end{equation}
As $P$ is non-singular, the answer to the above set of linear equations is
unique.  

We can now numerically compute this upper bound and from
Lemma~\ref{lem:approximation} we have an upper bound on the estimation
error due to quantization.  We get an approximate value of $0.799$.
We conclude that 
\begin{align*} C_{\text{P, SC}}(\{\text{BMS}(I=0.5)\})
&\geq 1-\frac 12 (0.392+0.799)\\ &=0.404.  
\end{align*} 
This slightly improves on the value $0.374$ in \eqref{equ:universallowerbound}.
In principle even better bounds can be derived by considering
values of $n$ beyond $1$. But the implied optimization problems 
that need to be solved are non-trivial.
\EEx
\end{example}

\section{Conclusion and Open Problems}
We proved that the compound capacity of polar codes under SC decoding
is in general strictly less than the compound capacity itself.  It is
natural to inquire why polar codes combined with SC decoding fail to
achieve the compound capacity. Is this due to the codes themselves
or is it a result of the sub-optimality of the decoding algorithm?
We pose this as an interesting open question.

In \cite{KSU09} polar codes based on general $\ell \times \ell$ matrices
$G$ were considered. It was shown that suitably chosen such codes have
an improved error exponent. Perhaps this generalization is also useful
in order to increase the compound capacity of polar codes.

\section*{Acknowledgment}
The work presented in this paper is partially supported by the National
Competence Center in Research on Mobile Information and Communication
Systems (NCCR-MICS), a center supported by the Swiss National Science
Foundation under grant number 5005-67322 and by the Swiss National
Science Foundation under grant number 200021-121903.

\bibliographystyle{IEEEtran} 
\bibliography{lth,lthpub}
\end{document}